\newtheorem{theorem}{Theorem}
\newtheorem{lemma}{Lemma}
\newtheorem{proof}{Proof}
\begin{document}
\title{On extrema of the objective functional for short-time generation of single qubit quantum gates}
\author{Alexander Pechen$^{1}$\footnote{Corresponding author. E-mail: \href{apechen@gmail.com}{apechen@gmail.com}; Webpage: \href{http://www.mathnet.ru/eng/person17991}{mathnet.ru/eng/person17991}} \, and Nikolay Il'in$^{2}$}
\date{}
\maketitle

\vspace{-1cm}
\begin{center}
$^{1}$Department of Mathematical Methods for Quantum Technologies,\\
Steklov Mathematical Institute of Russian Academy of Sciences\\ Gubkina str., 8, Moscow 119991, Russia\\
$^{2}$The National University of Science and Technology MISiS\\
Leninsky Prospekt, 4, Moscow 119049, Russia
\end{center}

\begin{abstract}
In the present work the extrema of the objective functional for the problem of  generation of quantum gates (logical elements for quantum computations) for two--level systems are investigated for short duration of the control. The problem of  existence of local but not global extrema, the so called traps, is considered. In prior works the absence of traps was proved for a sufficiently long control duration. In this paper we prove that for almost all target unitary operators and system Hamiltonians traps are absent for an arbitrarily small control duration. For the remainder target unitary operators and Hamiltonians we obtain a new estimate for the lower boundary of the control duration which guarantees the absence of traps.
\end{abstract}

\section*{Introduction}
In the present paper we study the problem of controlling a qubit, i.e. a two-level quantum
system, by using coherent control pulses (electromagnetic
field). Qubit is one of basic elements for realization of quantum
computing and for creation of quantum computer. An important problem is to generate
single qubit gates (logical elements for quantum computation)~\cite{1}.

The dynamics of a qubit interacting with coherent control $f(t)$ under the assumption 
of good enough isolation of the qubit from the environment is described by the Schr\"odinger equation for unitary evolution $U_t$ (a $2\times 2$ unitary matrix):
\begin{equation}
\label{eq1}
i\,\frac{dU_t}{dt}=(H_0+f(t)V)U_t,\qquad U_{t=0}=\mathbb I.
\end{equation}
Here $H_0$ and $V$ are $2\times 2$ Hermitian matrices. In order to make the control problem non-trivial we assume that $[H_0,V]\ne0$. In the presence of the environment the reduced dynamics of the qubit is described by various master equations~\cite{2,3,4} and by a quantum channel instead of a unitary transformation~\cite{5,6}. The control $f$ belongs to some set of admissible controls~$\mathcal U$, $f\in\mathcal U$. In the applications one considers sets of admissible controls $\mathcal U=L^1([0,T];\mathbb R)$, $\mathcal U=L^2([0,T];\mathbb R)$ and others. Here $T>0$ is the fixed control duration. The problem of optimal performance is also considered~\cite{7}. In this work we consider the set of control $\mathcal U=L^1([0,T];\mathbb R)$. Matrix elements $[U_t]_{ik}$ are assumed to be absolutely continuous functions on the interval $[0,T]$, $[U_t]_{ik}\in\mathrm{AC}[0,T]$.  In this case the equation~\eqref{eq1} has a unique solution for every control $f\in\mathcal U$~\cite{8}.

An important problem in quantum information is generation of quantum gates (special unitary $(2\times 2)$--matrices) $W\in\mathrm{SU}(2)$, i.e. the search for such a control $f$ that $U_T=W$, perhaps up to a phase factor.
This problem can be formulated as the problem of finding a control $f$ which maximizes the objective functional
\begin{equation}
\label{eq2}
\mathcal J_W[f] = \frac14|\operatorname{Tr}(W^\dag U_T)|^2.
\end{equation}
The objective functional $\mathcal J_W$ reaches its maximum value
$\mathcal J_W^\mathrm{max}=1$ on a unitary matrix of the form  $U_T=We^{i\omega}$,
where $\omega\in\mathbb R$ is an arbitrary phase. The global minimum of the objective $\mathcal J_W$ is equal to zero, $\mathcal J_W^\mathrm{min}=0$.
Examples of the objective matrix $W$ which are important for applications include the Hadamard gate
$W=\mathbb H$,
\begin{equation}
\label{eq3}
\mathbb H=\frac1{\sqrt{2}}
\begin{pmatrix}
1&1
\\
1&-1
\end{pmatrix},
\end{equation}
the phase shift gate $W=U_\phi$, where $\phi\in(0,2\pi)$,
\begin{equation}
\label{eq4}
U_\phi=\begin{pmatrix}
1&0\\0&e^{i\phi}
\end{pmatrix}
\end{equation}
and other.

In the present work we consider the problem of possible existence of local but not global maxima for the objective functional $\mathcal J_W$ for short $T$. Such local maxima are called traps~\cite{9}--\cite{13}.
We prove  the absence of traps for almost all objective unitary operators and system Hamiltonians for an arbitrarily small control duration. For the remainder  set of objective unitary operators and  Hamiltonians we obtain the new estimate for the lower boundary of the control duration, which guarantees the absence of traps.

\section{The absence of traps for controlling a qubit at long times}\label{s1}

If traps would exist they would become the obstacle for the search of globally optimal control by local search algorithms. In works~\cite{9,10} the  absence of traps was conjectured  for typical control problems for systems which are isolated from the environment, i.e. for closed quantum systems~\cite{14,15}. In works~\cite{16,17,18} the absence of traps was proved for two--levels closed quantum systems in the case of sufficiently long $T$.

Define the special control $f_0$  and time $T_0$: 
\begin{align}
f_0&:=\frac{-\operatorname{Tr}H_0\operatorname{Tr}V+2\operatorname{Tr}(H_0V)}
{(\operatorname{Tr}V)^2-2\operatorname{Tr}V^2},
\label{eq5}
\\
T_0&:=\frac{\pi}{\|H_0-\mathbb I\operatorname{Tr}H_0/2+f_0(V-\mathbb I\operatorname{Tr}V/2)\|}.
\label{eq6}
\end{align}
Here and below the norm of a matrix $A$ is the operator norm
$$
\|A\|=\sup_{\|\mathbf a\|=1}\|A\mathbf a\|.
$$
Note that $T_0<\infty$, because if $H_0=\mathbb I\operatorname{Tr}H_0/2-f_0V$
then $[H_0,V]=0$, that contradicts the assumption of non-triviality of the system Hamiltonian.

If $\operatorname{Tr} H_0\ne0$ and $\operatorname{Tr} V\ne0$, then replacing
$H_0$ and $V$ by $\widetilde H_0=H_0-\operatorname{Tr} H_0/2$
and $\widetilde V=V-\operatorname{Tr} V/2$ we can  transform the free Hamiltonian 
to the form with $\operatorname{Tr} \widetilde H_0=0$ and $\operatorname{Tr}\widetilde V=0$.
Such a replacement does not affect the existence of traps, because the evolution operator
$U_T$, which is determined by the solution of the equation~\eqref{eq1} for the pair
$(H_0,V)$, is related to the evolution operator $\widetilde U_T$, which is determined by the solution of the equation~\eqref{eq1} for the pair $(\widetilde H_0,\widetilde V)$, by the equality
$\widetilde U_T=U_Te^{-i\lambda(T)\mathbb I}$. Here
$\lambda(T)=\bigl(T\operatorname{Tr} H_0+\operatorname{Tr} V\int_0^Tf(t)\,dt\bigr)/2$.
Hence, $U_T$ differs from $\widetilde U_T$ by a phase and the objective value under such  replacement 
does not change:
$$
|\operatorname{Tr}(W^\dag\widetilde U_T)|^2=\bigl|e^{-i\lambda(T)}
\operatorname{Tr}(W^\dag U_T)\bigr|^2=|\operatorname{Tr}(W^\dag U_T)|^2.
$$
Unless otherwise stated, without loss of generality below we assume that the matrices $H_0$ and $V$ are traceless. We will also use the Pauli matrices $\sigma_x$,
$\sigma_y$ and $\sigma_z$:
\begin{equation}
\label{eq7}
\sigma_x=\begin{pmatrix} 0 & 1 \\ 1 & 0 \end{pmatrix}, \quad
\sigma_y=\begin{pmatrix} 0 & -i \\ i & 0 \end{pmatrix}, \quad
\sigma_z=\begin{pmatrix} 1 & 0 \\ 0 & -1 \end{pmatrix}.
\end{equation}

In~\cite{17} the following statement is proved.

\begin{theorem}
\label{t1}
If $[H_0,V]\ne0$ and $T\ge T_0$, then all maxima of the objective functional
$\mathcal J_W$ are global. Any control $f\ne f_0$ is not a trap  for any $T>0$.
\end{theorem}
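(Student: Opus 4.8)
The plan is to pass to the Pauli (Bloch) representation, reduce the vanishing of the gradient to a geometric condition on a single real vector, and then treat the two assertions of the theorem separately.

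Write $H_0=\mathbf h\cdot\boldsymbol\sigma$ and $V=\mathbf v\cdot\boldsymbol\sigma$ with $\mathbf h,\mathbf v\in\mathbb R^3$, so that $[H_0,V]\ne0$ is exactly $\mathbf h\times\mathbf v\ne0$ and the traceless normalization gives $U_t\in\mathrm{SU}(2)$. Any $A\in\mathrm{SU}(2)$ can be written $A=a_0\mathbb I+i\,\mathbf a\cdot\boldsymbol\sigma$ with $a_0\in\mathbb R$, $\mathbf a\in\mathbb R^3$, $a_0^2+|\mathbf a|^2=1$; applied to $A=W^\dag U_T$ this already shows $\operatorname{Tr}(W^\dag U_T)=2a_0$ is real and $\mathcal J_W=a_0^2$. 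First I would compute the gradient from $\delta U_T/\delta f(s)=-iU_TU_s^\dag V U_s$, obtaining $\delta\mathcal J_W/\delta f(s)=2a_0\,(\mathbf a\cdot\mathbf v_s)$, where $\mathbf v_s\cdot\boldsymbol\sigma:=U_s^\dag V U_s$, i.e. $\mathbf v_s=R(s)^{-1}\mathbf v$ for the $\mathrm{SO}(3)$ rotation $R(s)$ associated with $U_s$. Hence $f$ is critical iff $a_0=0$ (the global minimum $\mathcal J_W=0$) or $\mathbf a\cdot\mathbf v_s=0$ for all $s\in[0,T]$.

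For the second sentence I would analyze a critical point with $0<\mathcal J_W<1$, so $a_0\ne0$ and $\mathbf a\ne0$, and show it must be the constant control $f_0$. Put $g(s):=\mathbf a\cdot\mathbf v_s\equiv0$. Using $\dot{\mathbf v}_s=2\,\mathbf v_s\times\mathbf h_s$ (with $\mathbf h_s\cdot\boldsymbol\sigma:=U_s^\dag H_0U_s$) the control drops out and $g'(s)=2\,\mathbf a\cdot(\mathbf v_s\times\mathbf h_s)$ is absolutely continuous; $g(0)=g'(0)=0$ forces $\mathbf a\perp\mathbf v$ and $\mathbf a\perp(\mathbf h\times\mathbf v)$, hence $\mathbf a\parallel\mathbf n_0$, where $\mathbf n_0:=\mathbf h+f_0\mathbf v$ and $f_0=-\,\mathbf h\cdot\mathbf v/|\mathbf v|^2$ is \eqref{eq5} in traceless form. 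Differentiating once more and using $\dot{\mathbf h}_s=-2f(s)\,\mathbf v_s\times\mathbf h_s$ together with the rotation invariant $\mathbf v_s\cdot\mathbf h_s=\mathbf h\cdot\mathbf v$ gives, almost everywhere,
\[
0=g''(s)=4|\mathbf v|^2\,(\mathbf a\cdot\mathbf h_s)\,(f(s)-f_0).
\]
Since $\mathbf a\parallel\mathbf n_0$ one computes $\mathbf a\cdot\mathbf h_0\propto\mathbf n_0\cdot\mathbf h=|\mathbf h\times\mathbf v|^2/|\mathbf v|^2\ne0$, so $\mathbf a\cdot\mathbf h_s$ is continuous and nonzero near $s=0$; there $f\equiv f_0$, which makes the Hamiltonian constant and $R(s)$ a rotation about the fixed axis $\mathbf n_0\parallel\mathbf a$, so that $\mathbf a\cdot\mathbf h_s=\mathbf a\cdot\mathbf h$ stays nonzero. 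A connectedness argument then propagates $f\equiv f_0$ over all of $[0,T]$. Thus every non-global, non-minimal critical point equals the constant $f_0$, so any $f\ne f_0$ is not a trap, for every $T>0$.

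It remains to treat the single candidate $f\equiv f_0$ and show it is not a non-global local maximum once $T\ge T_0$. Here $T_0=\pi/|\mathbf n_0|$ (this is \eqref{eq6} in traceless form, since $\|\mathbf n_0\cdot\boldsymbol\sigma\|=|\mathbf n_0|$), under $f_0$ the evolution is a rotation about $\mathbf n_0$ through $\phi=T|\mathbf n_0|$, and $T\ge T_0\iff\phi\ge\pi$, i.e. $\mathbf v_s$ has completed a full revolution in the plane $\mathbf n_0^\perp$. I would compute the second variation: expanding $U_T$ to second order in a perturbation $\delta f$ gives $\operatorname{Tr}(W^\dag U_T)=c_0+\varepsilon c_1+\varepsilon^2c_2+\cdots$ with Hessian proportional to $\tfrac12|c_1|^2+\operatorname{Re}(\bar c_0c_2)$, a quadratic form in $\delta f$ built from the oscillatory modes $\cos(2|\mathbf n_0|s)$ and $\sin(2|\mathbf n_0|s)$ of the $f_0$-trajectory. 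The plan is to exhibit an explicit $\delta f$ for which this form is strictly positive precisely when $\phi\ge\pi$, so that $f_0$ is a saddle rather than a local maximum; together with the previous paragraph this yields that for $T\ge T_0$ all maxima are global. The main obstacle is exactly this last step: identifying the ascent direction and proving positivity of the Hessian at the threshold $\phi=\pi$, where the relevant sign-changing trigonometric integrals over $[0,T]$ first become favourable. The regularity bookkeeping in the continuation argument (working with $f\in L^1$ and almost-everywhere derivatives, using the absolute continuity of $\mathbf v_s$ and $\mathbf h_s$) is a secondary technical point.
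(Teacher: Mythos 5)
Your classification of critical points is correct and, modulo routine regularity bookkeeping, complete: the Bloch computation giving $\delta\mathcal J_W/\delta f(s)=2a_0(\mathbf a\cdot\mathbf v_s)$, the reduction of criticality with $0<\mathcal J_W<1$ to $g(s)=\mathbf a\cdot\mathbf v_s\equiv0$, the consequences $g(0)=g'(0)=0\Rightarrow\mathbf a\parallel\mathbf n_0=\mathbf h+f_0\mathbf v$, the identity $g''(s)=4|\mathbf v|^2(\mathbf a\cdot\mathbf h_s)(f(s)-f_0)$ (which I have checked, using $\dot{\mathbf v}_s=2\mathbf v_s\times\mathbf h_s$, $\dot{\mathbf h}_s=-2f\,\mathbf v_s\times\mathbf h_s$ and the invariants $|\mathbf h_s|,|\mathbf v_s|,\mathbf h_s\cdot\mathbf v_s$), the nondegeneracy $\mathbf n_0\cdot\mathbf h=|\mathbf h\times\mathbf v|^2/|\mathbf v|^2\ne0$, and the continuation argument (on a stretch where $f\equiv f_0$ the frame rotates about $\mathbf n_0\parallel\mathbf a$, so $\mathbf a\cdot\mathbf h_s=\mathbf a\cdot\mathbf h\ne0$ persists) all hold, and together they prove the \emph{second} sentence of the theorem: for every $T>0$ the only possible trap is the constant control $f_0$. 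Note that the paper itself does not reprove Theorem~\ref{t1} but cites it from~\cite{17}; your vector formulation is a legitimate equivalent of the critical-point analysis used there.

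The \emph{first} sentence --- that for $T\ge T_0$ the remaining candidate $f_0$ is not a non-global local maximum --- is precisely the part you leave unproven, and you concede as much (``the main obstacle is exactly this last step''). That is a genuine gap, and moreover your plan aims at the wrong target: you seek a direction $\delta f$ whose second variation becomes positive ``precisely when $\phi\ge\pi$'', but no such sharp threshold exists. In the rescaled units of~\eqref{eq22}, where $T_0=\pi$, indefiniteness of the Hessian at $f_0$ already holds for all $T\ge\pi/2$ (Lemma~\ref{l3}) and for even smaller $T$ depending on $\varphi_W$ (Lemma~\ref{l4}), so hunting for a transition exactly at $\phi=\pi$ would chase a sharpness that fails. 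The device that closes the gap --- used in~\cite{17} and reproduced in the proof of Lemma~\ref{l3} --- is not one clever ascent direction but a two-parameter family of narrow bumps $\lambda\delta_\varepsilon(t-t_1)+\mu\delta_\varepsilon(t-t_2)$: on these the quadratic form collapses to the $2\times2$ form $G(\lambda,\mu)$ of~\eqref{eq35}, whose discriminant $D(\tau)=\cos^2(2\tau+\varphi)-\cos^2\varphi$, $\tau=|t_2-t_1|$, is $\pi/2$-periodic in $\tau$ and therefore takes positive values in any window of length $\pi/2$ whenever $0<\cos^2\varphi<1$; hence for $T\ge\pi/2$ one can place $t_1,t_2\in[0,T]$ so that $G$ is indefinite and $f_0$ is a saddle, while $\cos^2\varphi\in\{0,1\}$ gives the global minimum or maximum. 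Since $T\ge T_0$ implies $T\ge\pi/2$, this supplies exactly the missing step; until you carry out this (or an equivalent) Hessian analysis, your proposal establishes only the second assertion of Theorem~\ref{t1}.
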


From the results of the paper~\cite{17} it follows that for small $T$ only the special control
$f=f_0$ may be a trap. Below in this work we will show that for almost all
$W$ the control $f_0$ is not a trap for any $T>0$.

\section{The absence of traps for small $T$}\label{s2}

According to Theorem~\ref{t1}, the only potential trap can be the
control $f=f_0$. Therefore, to explore the possibility of the existence of traps
for small $T$ it is sufficient to investigate the behaviour of the objective  at this point.

Theorem~\ref{t2} states the absence of traps for the objective $\mathcal J_W$ for almost all $W$. Let
$d=\|H_0+f_0V-\mathbb I\operatorname{Tr} H_0/2-f_0\mathbb I\operatorname{Tr} V/2\|$.
Note, that any matrix $W$ such that $[H_0+f_0V,W]=0$ has the form 
\begin{equation}
\label{eq8}
W=e^{i\alpha_W (H_0+f_0V)+i\beta_W}, \qquad \alpha_W\in\biggl(0,\frac{\pi}{d}\biggr],\quad
\beta_W\in[0,2\pi).
\end{equation}

\begin{theorem}
\label{t2}
Let in the equation~\eqref{eq1} be $[H_0,V]\ne 0$. Let
$[H_0+\nobreak f_0V,W]\ne0$.
Then for any $T>0$ all maxima of the objective functional $\mathcal J_W$ are global.
Let $[H_0+f_0V,W]=0$. If $\alpha_W\in(0,\pi/(2d))$, then all maxima
of the objective functional $\mathcal J_W$ are global for any $T>0$. If
$\alpha_W\in[\pi/(2d),\pi/d]$, then all maxima
of the objective functional $\mathcal J_W$
are global for any $T>\pi/d-\alpha_W$.
\end{theorem}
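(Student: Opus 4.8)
The plan is to localise everything at the unique candidate trap $f_0$ furnished by Theorem~\ref{t1}, and to work in the eigenbasis of the traceless Hermitian matrix $G:=H_0+f_0V$, whose eigenvalues are $\pm d$. The decisive structural fact is that, for traceless $H_0,V$, the constant $f_0=-\operatorname{Tr}(H_0V)/\operatorname{Tr}(V^2)$ minimises $\operatorname{Tr}((H_0+fV)^2)=2\|H_0+fV\|^2$, so that $\operatorname{Tr}(GV)=0$; in the eigenbasis of $G$ this forces $V$ to be purely off-diagonal, $V=\begin{pmatrix}0&v\\ \bar v&0\end{pmatrix}$ with $v\neq0$ (using $[H_0,V]\neq0$). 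Every subsequent computation exploits that $G$ is diagonal and $V$ anti-diagonal in this frame.

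First I would treat the first variation. Writing $\delta U_T=-i\int_0^T U_TU_s^{-1}VU_s\,\delta f(s)\,ds$ and setting $M:=W^\dagger e^{-iGT}\in\mathrm{SU}(2)$ (so $M_{21}=-\overline{M_{12}}$ and $c:=\operatorname{Tr}M=2\operatorname{Re}M_{11}\in\mathbb R$), the gradient at $f_0$ reduces, by the off-diagonality of $V$, to $c\operatorname{Im}\!\bigl(M_{12}\bar v\,e^{-2ids}\bigr)$, which vanishes for all $s\in[0,T]$ iff $c\,M_{12}=0$. When $[G,W]\neq0$ we have $M_{12}\neq0$, so either $f_0$ is not a critical point ($c\neq0$), or $c=0$ and $\mathcal J_W[f_0]=\tfrac14|c|^2=0$ is the global minimum; in neither case is $f_0$ a trap. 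This settles the first assertion.

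The case $[G,W]=0$ is the heart of the matter: $M$ is diagonal, $f_0$ is genuinely critical, and $\mathcal J_W[f_0]=\cos^2\theta$ with $\theta:=(\alpha_W+T)d$. If $\cos^2\theta=1$ the critical point is already global, so assume $\cos^2\theta<1$ and expand $\mathcal J_W[f_0+\epsilon g]$ to second order in the interaction picture generated by $G$ (where $e^{iGt}Ve^{-iGt}$ is off-diagonal with entry $ve^{2idt}$). The first-order term drops out of $|\operatorname{Tr}(W^\dagger U_T)|^2$ because $\operatorname{Tr}(M\cdot\text{off-diag})=0$, and the genuine second-order contribution comes only from the time-ordered double integral, whose diagonal entries collapse the second variation into one real quadratic form,
\[ \mathcal J_W[f_0+\epsilon g]-\mathcal J_W[f_0]=-\epsilon^2|v|^2\cos\theta\iint_{[0,T]^2} g(t_1)g(t_2)\cos\bigl(2d|t_1-t_2|-\theta\bigr)\,dt_1\,dt_2+O(\epsilon^3). \]
Thus $f_0$ fails to be a local maximum, and hence (as $\cos^2\theta<1$) fails to be a trap, exactly when $\cos\theta$ times this integral can be made negative. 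I would produce such a direction with the simplest admissible $g\in L^1=\mathcal U$, namely two narrow pulses at times $0$ and $s\in(0,T]$ approximating two point masses: the form then reduces to the $2\times2$ matrix with diagonal $\cos\theta$ and off-diagonal $\cos(2ds-\theta)$, whose relevant eigenvalue has the wrong sign precisely when $|\cos(2ds-\theta)|>|\cos\theta|$, i.e.\ when $2ds-\theta$ lies strictly closer to $\pi\mathbb Z$ than $\theta$ does.

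It then remains to exhibit an admissible separation $s$ in each region. Writing $\phi=\alpha_W d\in(0,\pi]$ and $\tau=dT$, the argument $2ds-\theta$ sweeps the interval $(-\theta,\tau-\phi]$: if $\tau\ge\phi$ one takes the value $0$ (at $s=\theta/2d\le T$); if $\alpha_W<\pi/(2d)$ and $\tau<\phi$ one takes $s=T$, where the distance $\phi-\tau$ of $\tau-\phi$ to $\pi\mathbb Z$ is less than $\min(\phi+\tau,\pi-\theta)$ thanks to $\phi<\pi/2$; and if $\alpha_W\ge\pi/(2d)$ with $\theta>\pi$ the interval contains $-\pi$. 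In each case $f_0$ is not a local maximum, which with Theorem~\ref{t1} gives the stated conclusions. The main obstacle is the reduction of the second variation to the clean kernel $\cos(2d|t_1-t_2|-\theta)$ — the interaction-picture expansion must be carried to second order and the first-order cancellation verified — after which the difficulty is purely trigonometric, the delicate case being the short-time window $\tau<\phi$ for $\alpha_W<\pi/(2d)$, with the threshold $\theta=\pi$ (equivalently $T=\pi/d-\alpha_W$) emerging as the exact boundary between the two regimes.
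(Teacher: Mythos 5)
Your proposal reproduces the paper's own proof in all essentials: Theorem~\ref{t1} localizes the question at $f_0$; diagonalizing $H_0+f_0V$ (with $\operatorname{Tr}\bigl((H_0+f_0V)V\bigr)=0$ forcing $V$ off-diagonal) is exactly the reduction to the system~\eqref{eq22}; your first-variation computation is Lemma~\ref{l2} with explicit $\mathrm{SU}(2)$ matrix entries in place of the Euler-angle parametrization; and your second-variation kernel $-|v|^2\cos\theta\cos(2d|t_1-t_2|-\theta)$ with the two-pulse indefiniteness criterion $|\cos(2ds-\theta)|>|\cos\theta|$ is precisely the Hessian~\eqref{eq31} and discriminant~\eqref{eq36}/\eqref{eq39} of Lemmas~\ref{l3}--\ref{l4}, your sweep of $2ds-\theta$ over $(-\theta,\tau-\phi]$ being only a mildly streamlined repackaging of the paper's split at $T=\pi/2$, with the same threshold $\theta=\pi$, i.e.\ $T=\pi/d-\alpha_W$. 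The single nick is the isolated case $\cos\theta=0$, where your ``exactly when'' criterion degenerates because the second variation vanishes identically; as the paper notes in Lemma~\ref{l3}, there $\mathcal J_W[f_0]=0$ is the global minimum, so $f_0$ is again not a trap --- a one-line fix of the same kind you already deployed in the case $[H_0+f_0V,W]\ne0$.
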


Proof of the theorem~\ref{t2} is based on the lemmas~\ref{l1}--\ref{l4}.

We will use the expansion of the objective functional $\mathcal J_W$ in Taylor series
up to the second-order term~\cite{19}.

\begin{lemma}
\label{l1}
There is an asymptotic expansion
\begin{align}
\mathcal J_W[f+\delta f]&=\mathcal J_W[f]+\int_0^T\frac{\delta\mathcal J_W}{\delta
f(t)}\delta f(t)\,dt
\nonumber
\\
&\qquad+\frac12\int_0^T\int_0^T\frac{\delta^2\mathcal J_W}{\delta f(t_2)\delta f(t_1)}
\delta f(t_1)\delta f(t_2)\,dt_1\,dt_2+o(\|\delta f\|^2_{L^1}).
\label{eq9}
\end{align}
Here
\begin{gather}
\frac{\delta\mathcal J_W}{\delta f(t)}=\frac12\operatorname{Im}\bigl(\operatorname{Tr}
Y^\dag\operatorname{Tr}(YV_t)\bigr),\qquad Y=W^\dag U_T,\quad V_t=U_t^\dag V U_t,
\label{eq10}
\\
\frac{\delta^2\mathcal J_W}{\delta f(t_2)\delta f(t_1)}=\begin{cases}\dfrac12
\operatorname{Re}\bigl(\operatorname{Tr}(Y V_{t_1})\operatorname{Tr}(Y^\dag V_{t_2})
-\operatorname{Tr}(Y V_{t_2}V_{t_1})\operatorname{Tr} Y^\dag\bigr), &t_2\ge t_1,
\\[2mm]
\dfrac12\operatorname{Re}\bigl(\operatorname{Tr}(Y V_{t_2})\operatorname{Tr}(Y^\dag V_{t_1})
-\operatorname{Tr}(Y V_{t_1}V_{t_2})\operatorname{Tr} Y^\dag\bigr), &t_2<t_1.\end{cases}
\label{eq11}
\end{gather}
The linear map $A\colon L^1([0,T];\mathbb R)\mapsto\mathbb R$ which is defined as
\begin{equation}
\label{eq12}
Ag=\int_0^T\frac{\delta\mathcal J_W}{\delta f(t)}g(t)\,dt,
\end{equation}
is the Frechet differential of the map $f\to\mathcal J_W[f]$.
\end{lemma}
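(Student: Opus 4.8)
The plan is to push the perturbation $f\mapsto f+\delta f$ through the propagator and then expand the scalar quantity $\mathcal J_W=\frac14|\operatorname{Tr}Y|^2$, $Y=W^\dagger U_T$. First I would factor the perturbed evolution in the interaction picture: writing $U_t^{\delta f}$ for the solution of \eqref{eq1} with control $f+\delta f$ and setting $U_t^{\delta f}=U_tS_t$, one checks directly from \eqref{eq1} that $S_t$ solves $i\dot S_t=\delta f(t)\,V_tS_t$, $S_0=\mathbb I$, where $V_t=U_t^\dagger VU_t$ is precisely the operator appearing in \eqref{eq10}--\eqref{eq11}. The equivalent Volterra equation $S_t=\mathbb I-i\int_0^t\delta f(s)V_sS_s\,ds$ is solved by the Dyson (Neumann) series, whose truncation reads
\[
S_T=\mathbb I-i\int_0^T\delta f(t)\,V_t\,dt-\int_0^T\!\!\int_0^{t_2}\delta f(t_2)\delta f(t_1)\,V_{t_2}V_{t_1}\,dt_1\,dt_2+R.
\]

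The main technical point is the remainder bound, and this is where the $L^1$ topology of the control set enters. Since $\|V_t\|=\|V\|$ for every $t$, Gronwall's inequality gives $\|S_t\|\le\exp(\|V\|\,\|\delta f\|_{L^1})$, and estimating the $n$-th term of the series by $(\|V\|\,\|\delta f\|_{L^1})^n/n!$ yields $\|R\|\le C\,\|\delta f\|_{L^1}^3=o(\|\delta f\|_{L^1}^2)$. I would take care that this estimate is uniform and stated in the $L^1$ norm, so that the claimed order of the remainder is genuine; this is the step I expect to require the most attention.

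The remaining work is algebraic. Substituting $U_T^{\delta f}=U_TS_T$ gives $\operatorname{Tr}(W^\dagger U_T^{\delta f})=\operatorname{Tr}(YS_T)=\operatorname{Tr}Y+b_1+b_2+O(\|\delta f\|_{L^1}^3)$ with $b_1=-i\int_0^T\delta f(t)\operatorname{Tr}(YV_t)\,dt$ and $b_2=-\int_0^T\!\int_0^{t_2}\delta f(t_2)\delta f(t_1)\operatorname{Tr}(YV_{t_2}V_{t_1})\,dt_1\,dt_2$. Expanding $\frac14|\operatorname{Tr}Y+b_1+b_2|^2$ and collecting by order, the zeroth order reproduces $\mathcal J_W[f]$; the first-order term $\frac12\operatorname{Re}(\overline{\operatorname{Tr}Y}\,b_1)$ becomes $\frac12\int_0^T\operatorname{Im}\bigl(\operatorname{Tr}Y^\dagger\operatorname{Tr}(YV_t)\bigr)\delta f(t)\,dt$ via $\operatorname{Re}(-iz)=\operatorname{Im}z$, which is \eqref{eq10}; and the second-order term $\frac14|b_1|^2+\frac12\operatorname{Re}(\overline{\operatorname{Tr}Y}\,b_2)$ gives the two trace products of \eqref{eq11}. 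To match the kernel exactly I would use $\overline{\operatorname{Tr}(YV_s)}=\operatorname{Tr}(Y^\dagger V_s)$ (valid since $V_s$ is Hermitian) to rewrite $|b_1|^2$ as a double integral of $\operatorname{Tr}(YV_{t_1})\operatorname{Tr}(Y^\dagger V_{t_2})$, then symmetrize and fold the already time-ordered $b_2$ integral onto the region $t_2\ge t_1$, reproducing the case distinction in \eqref{eq11}. Identifying the linear part with the functional $A$ of \eqref{eq12} and invoking the remainder bound shows $A$ is the Fr\'echet differential, which closes the lemma.
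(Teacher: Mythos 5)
Your proposal is correct and follows essentially the same route as the paper's proof: the interaction-picture factorization $U_t^{f+\delta f}=U_t^f Z_t$ leading to the Volterra equation for $Z_t$, iteration of that equation with an $L^1$-norm bound on the cubic remainder, and the same algebraic expansion of $\tfrac14|\operatorname{Tr}(Y Z_T)|^2$ producing the gradient and Hessian kernels and the Fr\'echet differentiability of $A$. The only cosmetic difference is that the paper iterates the integral equation exactly twice and bounds the explicit remainder using unitarity ($\|Z_t\|=1$), whereas you sum the Neumann series tail via a Gronwall estimate; both give $o(\|\delta f\|^2_{L^1})$.
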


\begin{proof}
The evolution operator $U_t^f$ induced by the control $f$
satisfies the Schr\"odinger equation
\begin{equation}
\label{eq13}
i\,\frac{dU_t^f}{dt}=(H_0+fV)U_t^f,
\end{equation}
The evolution operator $U_t^{f+g}$ induced by the control $f+g$ satisfies the  equation
\begin{equation}
\label{eq14}
i\,\frac{dU_t^{f+g}}{dt}=(H_0+fV)U_t^{f+g}+gVU_t^{f+g}.
\end{equation}
Making the replacement $U_t^{f+g}=U_t^{f}Z_t$ in the equation~\eqref{eq14}, we obtain
\begin{equation}
\label{eq15}
i\,\frac{dZ_t}{dt}=gV_tZ_t,\qquad V_t=U_t^{f\dag}VU^f_t.
\end{equation}
Now let represent~\eqref{eq15} in the integral form
\begin{equation}
\label{eq16}
Z_t=\mathbb I-i\int_0^{t}V_{t_1}Z_{t_1}g(t_1)\,dt_1.
\end{equation}
Iterating the expression~\eqref{eq16} and multiplying by  $U^f_t$ on the left, we obtain
\begin{align}
U_T^{f+g}&=U_T^f -i\int_0^TU_T^fV_t
g(t)\,dt-\int_0^T\int_0^{t_1}U_T^fV_{t_1}V_{t_2}g(t_1)g(t_2)\,dt_2\,dt_1
\nonumber
\\
&\qquad+i\int_0^T\int_0^{t_1}\int_0^{t_2}U_T^fV_{t_1}V_{t_2}Z_{t_3} g(t_1)g(t_2)g(t_3)\,dt_3\,dt_2\,dt_1.
\label{eq17}
\end{align}
Because $\|U_T^f\|=\|Z_t\|= 1$ and $\|V_{t}\|=\|V\|$, for the last summand in~\eqref{eq17} we obtain the estimate 
\begin{equation}
\label{eq18}
\biggl\|\int_0^T\int_0^{t_1}\int_0^{t_2}U_T^fV_{t_1}V_{t_2}Z_{t_3}
g(t_1)g(t_2)g(t_3)\,dt_3\,dt_2\,dt_1\biggr\|\le\|g\|^3_{L^1([0,T];\mathbb R)}\|V\|^2.
\end{equation}
For the first and second order variations we have the estimates  
\begin{gather}
\biggl\|\int_0^TU_T^fV_t g(t_1)\,dt_1\biggr\|\le \|g\|_{L^1([0,T];\mathbb R)}\|V\|,
\label{eq19}
\\
\biggl\|\int_0^T\int_0^{t_1}U_T^fV_{t_1}V_{t_2}g(t_1)g(t_2)\,dt_2\,dt_1
\biggr\|\le \|g\|^2_{L^1([0,T];\mathbb R)}\|V\|^2.
\label{eq20}
\end{gather}
Replacing in the objective  
\begin{equation}
\mathcal J_W[f+\delta f]=\frac14\operatorname{Tr}\bigl(W^\dag U_T^{f+\delta f}\bigr)
\operatorname{Tr}\bigl(WU_T^{\dag f+\delta f}\bigr)
\end{equation}
the expression~\eqref{eq17}, we obtain the asymptotic expansion~\eqref{eq9}.
From~\eqref{eq9}, because $\delta\mathcal J_W/\delta f(t)$
and~${\delta^2\mathcal J_W}/{\delta f(t_2)\delta f(t_1)}$ are bounded functions, we obtain
\begin{gather*}
\mathcal J_W[f+g]=\mathcal J_W[f]+Ag+o(\|g\|_{L^1}),
\\
Ag=\int_0^T\frac{\delta\mathcal J_W}{\delta f(t)}g(t)\,dt.
\end{gather*}
Therefore bounded linear operator $A\colon L^1([0,T];\mathbb R)\mapsto\mathbb R$
is the Frechet differential of the map $f\to\mathcal J_W[f]$.
This proves the lemma.
\end{proof}

Necessary conditions for the point $f=f_0$ to be a maximum or a minimum of the objective  functional $\mathcal J_W$
are determined by vanishing of the gradient
$\delta \mathcal J_W/\delta f\big|_{f=f_0}=0$ and by semi-definiteness of the quadratic form
$$
\int_0^T\int_0^T\operatorname{Hess}(\tau_2,\tau_1)f(\tau_2)f(\tau_1)\,d\tau_1\,d\tau_2\ge0,
\quad\text{where}\ \ \operatorname{Hess}(t_1,t_2)=\frac{\delta^2 \mathcal J_W}
{\delta f(t_2)\delta f(t_1)}\bigg|_{f=f_0}.
$$
A sufficient condition for the control  $f=f_0$ to be a saddle point is determined by vanishing of the gradient $\delta\mathcal J_W/\delta f\big|_{f=f_0}=0$
and by alternating quadratic form
$$
\int_0^T\int_0^T\operatorname{Hess}(\tau_2,\tau_1)f(\tau_2)f(\tau_1)\,d\tau_1\,d\tau_2.
$$

For the analysis of properties of the control problem in a neighbourhood of the special control $f_0$
it is convenient to choose a special basis in the space of $(2\times2)$-matrices, for which
the equation~\eqref{eq1}  has a simple form.
Recall that, without loss of generality, we can set
$\operatorname{Tr}H_0=\operatorname{Tr}V=0$.
Therefore there exists a unitary matrix $S$ such that $S(H_0+f_0V)S^\dag=h\sigma_z$,
where $h=\|H_0+f_0V\|$ and~$SVS^\dag=\mathrm v_x\sigma_x+\mathrm v_y\sigma_y+\mathrm
v_z\sigma_z$.
Note that $\mathrm v_z=0$, because
$\mathrm v_z=\operatorname{Tr}(SVS^\dag\sigma_z)/2=\operatorname{Tr}[V(H_0+f_0V)]/2h=0$,
since
$f_0=-\operatorname{Tr}(H_0V)/\operatorname{Tr}V^2$ under condition
$\operatorname{Tr}H_0=\operatorname{Tr}V=0$.
If we make the replacement $U\to SUS^\dag$ and the replacement of time $t\to t/h$, then
the equation~\eqref{eq1} takes the form
\begin{equation}
\label{eq22}
i\,\frac{dU_t}{dt}=\bigl(\sigma_z+g(t)(v_x\sigma_x+v_y\sigma_y)\bigr)U_t.
\end{equation}
Here $g(t)=f(t)-f_0$, $v_x=\mathrm v_x/h$, $v_y=\mathrm v_y/h$. For the system~\eqref{eq22} we have $f_0=0$ and $T_0=\pi$.
In the transition to a new basis the matrix $W$ undergoes transformation $W\to SWS^\dag$
which does not change the objective  values, so that
$|\operatorname{Tr}(W^\dag U_T)|^2/4=|\operatorname{Tr}(SW^\dag U_TS^\dag)|^2/4$.
In particular, critical points of the control problem for the system~\eqref{eq1} correspond to critical points  for~\eqref{eq22}. Therefore in lemmas~\ref{l2}--\ref{l4} we can consider maximization of the objective functional $|\operatorname{Tr}(W^\dag U_T)|^2/4$ for the system~\eqref{eq22}.

\begin{lemma}
\label{l2}
Let $[W,\sigma_z]\ne0$. Then for any $T>0$ the control $g=0$ is not a trap for maximization of the objective  functional  $\mathcal J_W$ for the system~\eqref{eq22}, and all maxima of the objective  functional are global.
\end{lemma}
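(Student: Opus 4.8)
The plan is to invoke Theorem~\ref{t1}: since any control $f\ne f_0$ (equivalently $g\ne0$ for the system~\eqref{eq22}) is never a trap, it suffices to show that $g=0$ is not a trap, i.e. that it is either not a critical point or not a local maximum. At $g=0$ the dynamics of~\eqref{eq22} is free, $U_t=e^{-i\sigma_z t}$, so that $U_T=e^{-i\sigma_z T}$ and $V_t=e^{i\sigma_z t}(v_x\sigma_x+v_y\sigma_y)e^{-i\sigma_z t}=a(t)\sigma_x+b(t)\sigma_y$, where $a(t)=v_x\cos 2t+v_y\sin 2t$ and $b(t)=-v_x\sin 2t+v_y\cos 2t$. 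Since $v_x^2+v_y^2>0$ (because $[H_0,V]\ne0$), the functions $a$ and $b$ are linearly independent on $[0,T]$ for every $T>0$; I will use this repeatedly.

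First I would compute the gradient~\eqref{eq10} at $g=0$. Writing $Y=W^\dag U_T$ in the Pauli basis, $Y=y_0\mathbb I+y_x\sigma_x+y_y\sigma_y+y_z\sigma_z$ with $y_\mu=\tfrac12\operatorname{Tr}(Y\sigma_\mu)$ and $\sigma_0=\mathbb I$, one gets $\operatorname{Tr}(YV_t)=2a(t)y_x+2b(t)y_y$ and $\operatorname{Tr}Y^\dag=2\bar y_0$, so that the gradient is proportional to $a(t)\operatorname{Im}(\bar y_0y_x)+b(t)\operatorname{Im}(\bar y_0y_y)$. By the linear independence of $a$ and $b$, the gradient vanishes identically on $[0,T]$ if and only if $\operatorname{Im}(\bar y_0y_x)=\operatorname{Im}(\bar y_0y_y)=0$.

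The key step is to read this condition off a unitary parametrization $Y=e^{i\phi}\bigl(n_0\mathbb I+i(n_x\sigma_x+n_y\sigma_y+n_z\sigma_z)\bigr)$ with real $n_\mu$ and $n_0^2+n_x^2+n_y^2+n_z^2=1$. Then $\operatorname{Im}(\bar y_0y_x)=n_0n_x$ and $\operatorname{Im}(\bar y_0y_y)=n_0n_y$, while $\mathcal J_W[0]=|y_0|^2=n_0^2$; moreover $[W,\sigma_z]\ne0$ is equivalent to $[Y,\sigma_z]\ne0$ (as $[Y,\sigma_z]=[W^\dag,\sigma_z]e^{-i\sigma_z T}$), i.e. to $(n_x,n_y)\ne(0,0)$. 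If the gradient does not vanish, then $g=0$ is not a critical point and hence not a trap. If it does vanish, then $n_0n_x=n_0n_y=0$ together with $(n_x,n_y)\ne(0,0)$ forces $n_0=0$, so that $\mathcal J_W[0]=0$ is the global minimum; choosing any direction $g$ with $p:=\int_0^T\operatorname{Tr}(YV_t)g(t)\,dt\ne0$ (possible since $(y_x,y_y)\ne(0,0)$ makes $a y_x+b y_y\not\equiv0$) yields, via Lemma~\ref{l1} and $\operatorname{Tr}Y=0$, a positive second variation $\tfrac14|p|^2>0$, so $g=0$ is not a local maximum. In either case $g=0$ is not a trap, and by Theorem~\ref{t1} all maxima of $\mathcal J_W$ are global. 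The main obstacle is exactly this last translation: computing $V_t$ and the gradient is routine, whereas the crux is recognizing that under $[W,\sigma_z]\ne0$ the vanishing of the gradient forces $g=0$ to sit at the global minimum rather than at a genuine interior maximum.
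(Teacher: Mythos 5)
Your proposal is correct and takes essentially the same route as the paper: compute the gradient~\eqref{eq10} at $g=0$ along the free evolution $U_t=e^{-i\sigma_z t}$ and show that its vanishing forces either $\mathcal J_W[0]=0$ (the global minimum) or $[W,\sigma_z]=0$, contradicting the hypothesis; your Pauli-component parametrization $Y=e^{i\phi}\bigl(n_0\mathbb I+i\,\mathbf n\cdot\boldsymbol{\sigma}\bigr)$ is an equivalent substitute for the paper's Euler-angle form~\eqref{eq24}, with $n_0n_x=n_0n_y=0$ playing exactly the role of~\eqref{eq28}--\eqref{eq29}. Your one genuine addition is the second-order step in the degenerate case: where the paper simply observes that the remaining critical points sit at the global minimum $\mathcal J_W=0$ and treats this as ruling out a trap, you verify via $\operatorname{Tr}Y=0$ and Lemma~\ref{l1} that the second variation equals $\frac14|p|^2>0$ for a suitable direction $g$, so $g=0$ is not even a local maximum there --- a slightly more complete treatment than the paper's.
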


\begin{proof}
The solution of the equation~\eqref{eq22} with constant control $g=\nobreak0$
has the form $U_t=e^{-i\sigma_z t}$. The matrix $V_t=e^{i\sigma_z
t}Ve^{-i\sigma_z t}$ takes the form
\begin{equation}
\label{eq23}
V_t=v\cos(2t-\phi)\sigma_x-v\sin(2t-\phi)\sigma_y.
\end{equation}
Here $v=\sqrt{v_x^2+v_y^2}$ and $\phi=\arctan(v_y/v_x)$.

The matrix $Y=W^\dag U_T$ is unitary. We can parametrize it by the angles  $\varphi$, $\psi$, $\theta$ and by phase $\omega$ (see~\cite{19}) as
\begin{equation}
\label{eq24}
Y=e^{i\omega}
\begin{pmatrix}
e^{i\varphi}\cos\theta & e^{i\psi}\sin\theta
\\
-e^{-i\psi}\sin\theta & e^{-i\varphi}\cos\theta
\end{pmatrix}.
\end{equation}
The angles $\psi$, $\varphi$ and $\theta$ are expressed in terms of the Euler angles $\varphi'$,
$\theta'$ and $\psi'$, which belong to the domain $0\le\varphi'<2\pi$,
$0<\theta'<\pi$, $-2\pi<\psi'<2\pi$, in the following way: $\varphi=(\varphi'+\psi')/2$,
$\psi=(\varphi'-\psi'+\pi)/2$ and $\theta=\theta'/2$. The overall phase $\omega$ can be omitted,
since in the expression for the gradient~\eqref{eq10} and Hessian~\eqref{eq11} the matrices $Y$
and $Y^\dag$ appear in pairs, so that the gradient and the Hessian are phase invariant.
The expression for the objective at the point $g=0$ through angles $\varphi$, $\psi$ and
$\theta$ has the form
\begin{equation}
\label{eq25}
\mathcal J_W[0]=\frac14|\operatorname{Tr} Y|^2=\cos^2\varphi\cos^2\theta.
\end{equation}
We introduce the notation
\begin{equation}
\label{eq26}
L(X):=\frac12\operatorname{Im}\bigl(\operatorname{Tr} Y^\dag\operatorname{Tr}(YX)\bigr).
\end{equation}
If $g=0$ is an extrema for~\eqref{eq22}, then at this point the gradient of the objective  functional
must vanish identically:
\begin{equation}
\label{eq27}
\frac{\delta\mathcal J_W}{\delta f(t)}\biggl|_{g=0}=
v\cos(2t-\phi)L(\sigma_x)-v\sin(2t-\phi)L(\sigma_y)=0\quad \forall t\in[0,T].
\end{equation}
The equality~\eqref{eq27} can be satisfied only if $L(\sigma_x)=0$ and
$L(\sigma_y)=0$. This imposes the restrictions on the angles $\varphi$, $\psi$ and $\theta$
at the extremal points:
\begin{align}
L(\sigma_x)&=2\cos\varphi\cos\theta\sin\theta\sin\psi=0
\label{eq28},
\\
L(\sigma_y)&=2\cos\varphi\cos\theta\sin\theta\cos\psi=0.
\label{eq29}
\end{align}
If $\cos\varphi\cos\theta=0$, then according to~\eqref{eq25} the objective
reaches its global minima $\mathcal J_W=0$. If $\cos\varphi\cos\theta\ne0$,
then the satisfaction of the equalities~\eqref{eq28} and~\eqref{eq29} requires that
$\sin\theta=0$. Under this condition $\cos\theta=\pm 1$ and up to a phase multiplier we have
 $Y=e^{i\sigma_z\varphi}$. Then
\begin{equation}
\label{eq30}
W=U_TY^{\dag}=e^{-i\sigma_z T}Y^\dag=e^{-i\sigma_z (T+\varphi)}.
\end{equation}
Thus, in this case $[W,\sigma_z]=0$ and, therefore, the assumption of the
lemma is not satisfied. This proves the lemma.
\end{proof}

Recall that for the system~\eqref{eq22} the special time is $T_0=\pi$ and for $T\ge T_0$ according to theorem~\ref{t1} traps are absent. The lemma~\ref{l2} states that if
$[W,\sigma_z]\ne 0$ then traps are absent for any $T>0$.
Here we show that if $[W,\sigma_z]=0$ then the lower bound for the time $T$, for which there are no trap, can be reduced.

\begin{lemma}
\label{l3}
For $T\ge\pi/2$,  for any $W$ the control $g=0$ is not a trap for maximization of the objective functional $\mathcal J_W$ for the system~\eqref{eq22}.
\end{lemma}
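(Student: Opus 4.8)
The plan is to combine Theorem~\ref{t1} and Lemma~\ref{l2} to isolate the only non-trivial configuration and then to read off the sign of the second variation on a cleverly chosen perturbation. By Lemma~\ref{l2} the case $[W,\sigma_z]\ne0$ is already settled for every $T>0$, so I would assume $[W,\sigma_z]=0$. Then $W$ is diagonal up to a phase, so $Y=W^\dagger U_T=W^\dagger e^{-i\sigma_z T}$ is diagonal, $Y=e^{i\sigma_z\varphi}$ up to a phase (i.e.\ $\theta=0$ in~\eqref{eq24}); in particular $L(\sigma_x)=L(\sigma_y)=0$ by~\eqref{eq28}--\eqref{eq29}, so the gradient~\eqref{eq27} vanishes and $g=0$ is indeed a critical point, with $\mathcal J_W[0]=\cos^2\varphi$ by~\eqref{eq25}. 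If $\cos^2\varphi=1$ then $g=0$ is a global maximum, and if $\cos\varphi=0$ it is the global minimum $\mathcal J_W=0$; in either case $g=0$ is not a trap. Thus the only remaining case is $0<\cos^2\varphi<1$.

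Next I would evaluate the Hessian~\eqref{eq11} at $g=0$. Because $Y$ is diagonal while $V_t$ in~\eqref{eq23} is a linear combination of $\sigma_x$ and $\sigma_y$, one has $\operatorname{Tr}(YV_t)=\operatorname{Tr}(Y^\dagger V_t)=0$, so the first term of~\eqref{eq11} vanishes. Using $\sigma_x^2=\sigma_y^2=\mathbb I$ and $\sigma_x\sigma_y=i\sigma_z$ one finds $V_{t_2}V_{t_1}=v^2\bigl(\cos(2(t_2-t_1))\mathbb I+i\sin(2(t_2-t_1))\sigma_z\bigr)$, hence $\operatorname{Tr}(YV_{t_2}V_{t_1})=2v^2\cos(\varphi+2(t_2-t_1))$, while $\operatorname{Tr}Y^\dagger=2\cos\varphi$. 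Collecting both cases of~\eqref{eq11} yields the symmetric kernel
\begin{equation*}
\operatorname{Hess}(t_1,t_2)=-2v^2\cos\varphi\,\cos\bigl(\varphi+2|t_1-t_2|\bigr),
\end{equation*}
where $v=\sqrt{v_x^2+v_y^2}>0$ since $[H_0,V]\ne0$; note that the phase $\phi$ has dropped out.

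To show that $g=0$ is not a local maximum I would test this form on a perturbation concentrated at two instants, $\delta f=a\,\delta_{p}+b\,\delta_{q}$ with $0\le p<q\le T$ and separation $\tau:=q-p$; rigorously one replaces the two deltas by narrow $L^1$ pulses, which is legitimate because the kernel is continuous and the remainder in~\eqref{eq9} is $o(\|\delta f\|^2_{L^1})$. The second variation then collapses to the $2\times2$ quadratic form with matrix
\begin{equation*}
M(\tau)=-v^2\begin{pmatrix}\cos^2\varphi & \cos\varphi\cos(\varphi+2\tau)\\[1mm] \cos\varphi\cos(\varphi+2\tau) & \cos^2\varphi\end{pmatrix},
\end{equation*}
whose eigenvalues are $-v^2\bigl(\cos^2\varphi\pm\cos\varphi\cos(\varphi+2\tau)\bigr)$. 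Since $\cos\varphi\ne0$, this matrix has a positive eigenvalue, so that the second variation is positive in some direction, precisely when there is a separation $\tau\in(0,T]$ with $|\cos(\varphi+2\tau)|>|\cos\varphi|$.

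The crux is therefore to produce such a $\tau$, and this is where the bound $T\ge\pi/2$ enters. As $\tau$ ranges over $(0,T]$ the argument $\varphi+2\tau$ sweeps the interval $J=(\varphi,\varphi+2T]$ of length $2T\ge\pi$. The ``bad'' set $\{\theta:|\cos\theta|\le|\cos\varphi|\}$ is a union of closed arcs about $\pi/2$ and $3\pi/2$, each of length $\pi-2\arccos|\cos\varphi|<\pi$, separated by ``good'' arcs on which $|\cos\theta|>|\cos\varphi|$. Since every connected component of the bad set is strictly shorter than $\pi\le|J|$, the interval $J$ cannot lie inside the bad set and hence meets the good set; the corresponding $\tau$ makes the second variation positive. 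Consequently $g=0$ is not a local maximum, and therefore not a trap. I expect the delicate points to be the interval-covering estimate that yields the sharp constant $\pi/2$ (for $T<\pi/2$ the interval $J$ can be swallowed by a single bad arc, so the argument breaks down) together with the approximation of the two-point perturbation by admissible controls; the algebraic reduction of~\eqref{eq11} to $M(\tau)$ is routine.
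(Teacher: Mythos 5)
Your proposal is correct and takes essentially the same route as the paper's proof: after disposing of $[W,\sigma_z]\ne0$ via Lemma~\ref{l2}, you arrive at the same Hessian kernel~\eqref{eq31}, test it on the same two-pulse perturbations, and your positive-eigenvalue condition $|\cos(\varphi+2\tau)|>|\cos\varphi|$ is exactly the paper's discriminant condition $D>0$ in~\eqref{eq36}, with your period-$\pi$ sweep of $\varphi+2\tau$ playing the role of the paper's observation that $D$ has period $\pi/2$ in $|t_2-t_1|$ and attains positive values on $[0,\pi/2]$. The only cosmetic differences are that you verify the kernel by direct computation rather than quoting it, and that you exhibit a single ascent direction (which suffices to exclude a local maximum) where the paper exhibits full indefiniteness of the quadratic form.
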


\begin{proof}
Under the condition $\sin\theta=0$ the Hessian has the form
\begin{equation}
\label{eq31}
\operatorname{Hess}(t_2,t_1)=-2v^2\cos\varphi\cos(2|t_2-t_1|+\varphi).
\end{equation}
We introduce the auxiliary function
\begin{equation}
\label{eq32}
\delta_\varepsilon (t)=\begin{cases}
0, & |t|\ge\dfrac{\varepsilon}2,
\\
\dfrac1{\varepsilon}, & |t|<\dfrac{\varepsilon}2.
\end{cases}
\end{equation}
Then for all $f\in\mathrm C[0,T]$ and $t\in(\varepsilon/2,T-\varepsilon/2)$ we have
\begin{equation}
\label{eq33}
\int_0^T\delta_\varepsilon(\tau-t)f(\tau)\,d\tau=f(t)+O(\varepsilon).
\end{equation}
Let
$$
f_\varepsilon(t)=\lambda\delta_\varepsilon(t-t_1)+\mu\delta_\varepsilon (t-t_2),
\qquad\varepsilon/2<t_1<t_2<T-\varepsilon/2,\quad \varepsilon<t_2-t_1.
$$
Substituting the function $f_\varepsilon(t)$ in the expression for the second variation of the objective functional, we obtain
\begin{align}
(f,\operatorname{Hess}f)&=
\int_0^T\int_0^T\operatorname{Hess}(\tau_2,\tau_1)f(\tau_2)f(\tau_1)\,d\tau_1\,d\tau_2
\nonumber
\\
&=-2v^2\cos\varphi G(\lambda,\mu)+O(\varepsilon),
\label{eq34}
\end{align}
where
\begin{equation}
\label{eq35}
G(\lambda,\mu)=\lambda^2\cos\varphi+2\lambda\mu\cos(2|t_2-t_1|+\varphi)+\mu^2\cos\varphi.
\end{equation}
Bilinear form $G(\lambda,\mu)$ is alternating if and only if its discriminant $D$ is positive:
\begin{equation}
\label{eq36}
D=D(|t_2-t_1|)=\cos^2(2|t_2-t_1|+\varphi)-\cos^2\varphi>0.
\end{equation}
If $\cos^2\varphi=0$ or $\cos^2\varphi=1$, then the objective functional has the global extrema $\mathcal J_W=0$ or $\mathcal J_W=1$. Hence,
trap may correspond  only to those angles $\varphi$ for which
$0<\cos^2\varphi<1$. Then $D$ as function of the difference  $|t_2-t_1|$ takes positive values at some points in the interval  $[0,\pi/2]$, and because it has a period $\pi/2$, its maximum values are positive and minimum values are negative.
Let $t_1,t_2\in [0,\pi/2]$ be such that $G(\lambda,\mu)$ is an alternating form. For such $t_1$ and $t_2$ choose $\lambda_1$, $\lambda_2$, $\mu_1$ and
$\mu_2$ so that $G(\lambda_1,\mu_1)>0$ and $G(\lambda_2,\mu_2)<0$. Let
\begin{align*}
f_{1,\varepsilon}(t)&=\lambda_1\delta_\varepsilon (t-t_1)+\mu_1\delta_\varepsilon
(t-t_2),
\\
f_{2,\varepsilon}(t)&=\lambda_2\delta_\varepsilon (t-t_1)+\mu_2\delta_\varepsilon
(t-t_2),
\end{align*}
where $\varepsilon$ is such that signs of $(f_{j,\varepsilon},
\operatorname{Hess}f_{j,\varepsilon})$, $j=1,2$, coincide with signs of
$\lim_{\varepsilon\to0}(f_{j, \varepsilon},\operatorname{Hess}f_{j,\varepsilon})$.
Then $(f_{1,\varepsilon},\operatorname{Hess}f_{1,\varepsilon})$ and
$(f_{2,\varepsilon},\operatorname{Hess}f_{2,\varepsilon})$ will have
opposite signs, i. e. Hessian at the point $f=0$ is not sign definite.
This proves the lemma.
\end{proof}

Consider matrix $W$ of the form
\begin{equation}\label{W}
W=e^{i\sigma_z\varphi_W}.
\end{equation}
Note that we can put $\varphi_W\in(0,\pi]$. In the opposite case
$\varphi_W=\varphi_W'+\pi k$, $\varphi_W'\in[0,\pi]$ and
$$
W=e^{i\sigma_z\varphi_W}=e^{i\sigma_z\varphi_W'}(-1)^k=W'e^{i\pi k},
$$
that differs by a non-significant phase factor. The corresponding matrix
$Y$ has the form
\begin{equation}
\label{eq38}
Y=e^{-i\sigma_z(\varphi_W+T)},
\end{equation}
i.\,e.\ $\varphi=-\varphi_W-T$.

\begin{lemma}
\label{l4}
If $\varphi_W\in(0,\pi/2)$, then for any $T>0$ the control $g=0$ is not a trap for maximization of the objective functional $\mathcal J_W$ for the system~\eqref{eq22}. If $\varphi_W\in[\pi/2,\pi]$, then for any
$T>\pi-\varphi_W$ the control $g=0$ is not a trap for maximization of the objective functional $\mathcal J_W$ for the system~\eqref{eq22}.
\end{lemma}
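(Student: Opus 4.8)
The plan is to analyze the Hessian at $g=0$ exactly as in the proof of Lemma~\ref{l3}, but now keeping careful track of the constraint that the support of the test control lies in $[0,T]$, which becomes binding for small $T$. For $W=e^{i\sigma_z\varphi_W}$ the corresponding $Y$ is given by~\eqref{eq38}, so in the parametrization~\eqref{eq24} we have $\theta=0$ and $\varphi=-(\varphi_W+T)$. Hence $\sin\theta=0$ and the Hessian is given by~\eqref{eq31}. First I would dispose of the long-time range: for $T\ge\pi/2$ Lemma~\ref{l3} already yields the absence of a trap, so it remains only to treat $0<T<\pi/2$.

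For $0<T<\pi/2$ I would insert the two-bump test control $f_\varepsilon(t)=\lambda\delta_\varepsilon(t-t_1)+\mu\delta_\varepsilon(t-t_2)$ with $0<t_1<t_2<T$, exactly as in Lemma~\ref{l3}, so that in the limit $\varepsilon\to0$ the second variation reduces to $-2v^2\cos\varphi\,G(\lambda,\mu)$ with $G$ as in~\eqref{eq35}. By~\eqref{eq36} this quadratic form in $(\lambda,\mu)$ is alternating precisely when $D(s)>0$, where $s=t_2-t_1\in(0,T)$. Thus it suffices to exhibit a single $s\in(0,T)$ with $D(s)>0$: the Hessian is then indefinite (one may then pick $(\lambda_1,\mu_1)$ and $(\lambda_2,\mu_2)$ as in Lemma~\ref{l3} giving opposite signs), so $g=0$ is a saddle point and in particular not a trap.

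The core of the argument is the sign analysis of $D$ on the accessible window $s\in(0,T)$. Writing $D(s)=-\sin(2s)\,\sin\bigl(2s-2(\varphi_W+T)\bigr)$ and using that $\sin(2s)>0$ throughout $(0,T)\subset(0,\pi/2)$, the condition $D(s)>0$ becomes $\sin\bigl(2s-2(\varphi_W+T)\bigr)<0$. As $s$ runs over $(0,T)$ the argument sweeps the interval $\bigl(-2(\varphi_W+T),\,-2\varphi_W\bigr)$, whose right endpoint is $-2\varphi_W$ and whose length is $2T$. For $\varphi_W\in(0,\pi/2)$ the right endpoint $-2\varphi_W$ lies in $(-\pi,0)$, where $\sin<0$; choosing $s$ near $T$ gives $D(s)>0$. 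For $\varphi_W\in(\pi/2,\pi]$ the right endpoint lies in $[-2\pi,-\pi)$, where $\sin\ge0$, so one must instead push the left endpoint past $-2\pi$ into the interval $(-3\pi,-2\pi)$ where $\sin<0$ again; this happens exactly when $-2(\varphi_W+T)<-2\pi$, i.e.\ $T>\pi-\varphi_W$, and then choosing $s$ near $0$ gives $D(s)>0$. In both cases the Hessian is indefinite and $g=0$ is not a trap, which is the assertion of the lemma.

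The step I expect to be the main obstacle is this final endpoint bookkeeping. Because the window of accessible separations $s$ has length only $2T<\pi$, strictly shorter than the half-period $\pi/2$ of $D$ exploited in Lemma~\ref{l3}, whether $D$ ever becomes positive depends delicately on the position of $-2\varphi_W$ relative to the zeros of $\sin$, and this is precisely what produces the threshold $T=\pi-\varphi_W$. The boundary value $\varphi_W=\pi/2$ is the delicate one: for $T<\pi/2$ one checks that $D(s)<0$ on all of $(0,T)$, so $g=0$ is then genuinely a trapping maximum and the saddle argument cannot apply; consistency with the claim is restored only at $T=\pi/2$, where either Lemma~\ref{l3} applies or, equivalently, $\cos^2\varphi=1$ forces $\mathcal J_W[0]=1$ so that $g=0$ is the global maximum. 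Verifying that the strict inequality $T>\pi-\varphi_W$ is the correct threshold, and that the two regimes glue correctly with Lemma~\ref{l3} at $T=\pi/2$, is where the care is needed.
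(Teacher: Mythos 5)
Your proposal is correct and takes essentially the same route as the paper's own proof: the same two-bump test functions reduce the second variation to the form $G(\lambda,\mu)$ with discriminant $D$ factored as in~\eqref{eq39}, and your ``sweep'' of the argument of $\sin$ over $\bigl(-2(\varphi_W+T),-2\varphi_W\bigr)$ reproduces exactly the paper's choices $|t_2-t_1|=T-\varepsilon$ for $\varphi_W\in(0,\pi/2)$ and $|t_2-t_1|=\varepsilon$ under the condition $\pi<\varphi_W+T<3\pi/2$ (i.e.\ $T>\pi-\varphi_W$, with $T\ge\pi/2$ delegated to Lemma~\ref{l3}) for $\varphi_W\in[\pi/2,\pi]$. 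One caveat: your aside that for $\varphi_W=\pi/2$ and $T<\pi/2$ the control $g=0$ is ``genuinely a trapping maximum'' is not established by this computation --- sign-definiteness of the Hessian on the two-bump family only shows the saddle test fails, not that a local maximum exists --- but this remark lies outside the lemma's claim and does not affect the proof.
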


\begin{proof}
The Hessian of the objective functional $\mathcal J_W[f]$ at the point $g=0$ has the form~\eqref{eq34},
where the bilinear form $G(\lambda,\mu)$ is defined by the equality~\eqref{eq35}. The Hessian is sign definite if the discriminant $D$, which is defined by the formula~\eqref{eq36},
is positive. Let us demonstrate that there exist $0<t_1$ and $t_2<T$ such that $D>0$.

The expression~\eqref{eq36} for $D$  can be rewritten as
\begin{equation}
\label{eq39}
D=\sin2(\varphi_W+T-|t_2-t_1|)\sin2|t_2-t_1|.
\end{equation}
If $\varphi_W\in(0,\pi/2)$, then for any $T$, $\pi/2>T>0$ chose
$|t_2-t_1|=T-\varepsilon$, where $\varepsilon$ is sufficiently small such that
$(\varphi_W+\varepsilon)\in(0,\pi/2)$ and $T-\varepsilon>0$. Then
\begin{equation}
\label{eq40}
D=\sin2(\varphi_W+\varepsilon)\sin2(T-\varepsilon)>0,
\end{equation}
because  $(\varphi_W+\varepsilon)\in(0,\pi/2)$ and $T<\pi/2$.

If $\varphi_W\in[\pi/2,\pi)$, then in order to obtain $D>0$ it is sufficient to satisfy the
inequalities
\begin{equation}
\label{eq41}
\pi<\varphi_W+T-|t_2-t_1|<\frac{3\pi}2, \quad |t_2-t_1|<\frac{\pi}2.
\end{equation}
For any $T$ which satisfies the  inequality
\begin{equation}
\label{eq42}
\pi<\varphi_W+T<\frac{3\pi}2,
\end{equation}
choose such a small $\varepsilon$ that
\begin{equation}
\label{eq43}
\pi<\varphi_W+T-\varepsilon<\frac{3\pi}2.
\end{equation}
Choose $t_1$ and $t_2$ such that $|t_2-t_1|=\varepsilon$.
Then for such $\varepsilon$ the first inequality in~\eqref{eq41} holds by the virtue of~\eqref{eq43}. The right inequality in~\eqref{eq42}, i.e.,
$T<{3\pi}/2-\varphi_W$, is a consequence of the inequality $T<{\pi}/2$, but for $T\ge{\pi}/2$ traps are absent according to the lemma~\ref{l3}.
Hence, for $\varphi_W\in[\pi/2,\pi]$ and $T>{\pi}-\varphi_W$
there exist $t_1$ and $t_2$ such that $D>0$. This proves the lemma.
\end{proof}

\begin{proof}[of the Theorem~\ref{t2}]
Let us make the inverse transformation from the system~\eqref{eq22} to the system~\eqref{eq1}\enskip
$U\to S^\dag US$, $W\to S^\dag WS$, replace time $t\to th$ and take into account that
$g(t)=f(t)-f_0$. In this case $T\to Th$ and $\varphi_W\to \alpha_W h$. Then  we obtain the statement of the Theorem as a corollary of the lemmas~\ref{l2}--\ref{l4}.
\end{proof}

\section{Numerical analysis of the behaviour of the objective functional in the neighbourhood of the special control $f_0$}\label{s3}

The absence of traps for the objective functional $\mathcal J_W$ for the system~\eqref{eq22}
when $T>\pi/2$ may be illustrated by the numerical analysis. Fig.~\ref{fig1}
shows the plots for the objective value   $J_0=\mathcal J_W[0]$ calculated at the special control
$f=0$, and the probability $P$ that a random control $f$ in a sufficiently small neighbourhood of a special control satisfies the
inequality $\mathcal J_W[f]<J_0$, both as  functions of $(\alpha,\varphi_W)$. Here
$\alpha$ is the angle between the vector
$\mathbf v=\operatorname{Tr}(\boldsymbol{\sigma}V)/2$ and the axis $Ox$ and the angle $\varphi_W$
parametrizes the matrix 
\begin{equation}
\label{eq44}
W=\begin{pmatrix}
e^{i\varphi_W}&0
\\
0&e^{-i\varphi_W}
\end{pmatrix}.
\end{equation}
The matrix  $W$ of this form determines the phase shift gate 
$U_{\phi}=e^{-i\phi/2}W$, where $\varphi_W=-\phi/2$. At each point $(\alpha,\varphi_W)$
the probability $P(\alpha,\varphi_W)$ is estimated as the proportion of realizations of the inequality 
$\mathcal J_W[f]<J_0$ for values of the objective functional  $\mathcal J_W$ calculated for  $M=10^3$ randomly chosen controls in the neighbourhood of the special control $f=0$:
\begin{equation}
\label{eq45}
P(\alpha,\varphi_W)=\frac{\#(f\colon\mathcal J_W[f]<J_0)}{M}.
\end{equation}

\begin{figure}[!ht]
\begin{center}
\includegraphics[scale=0.55]{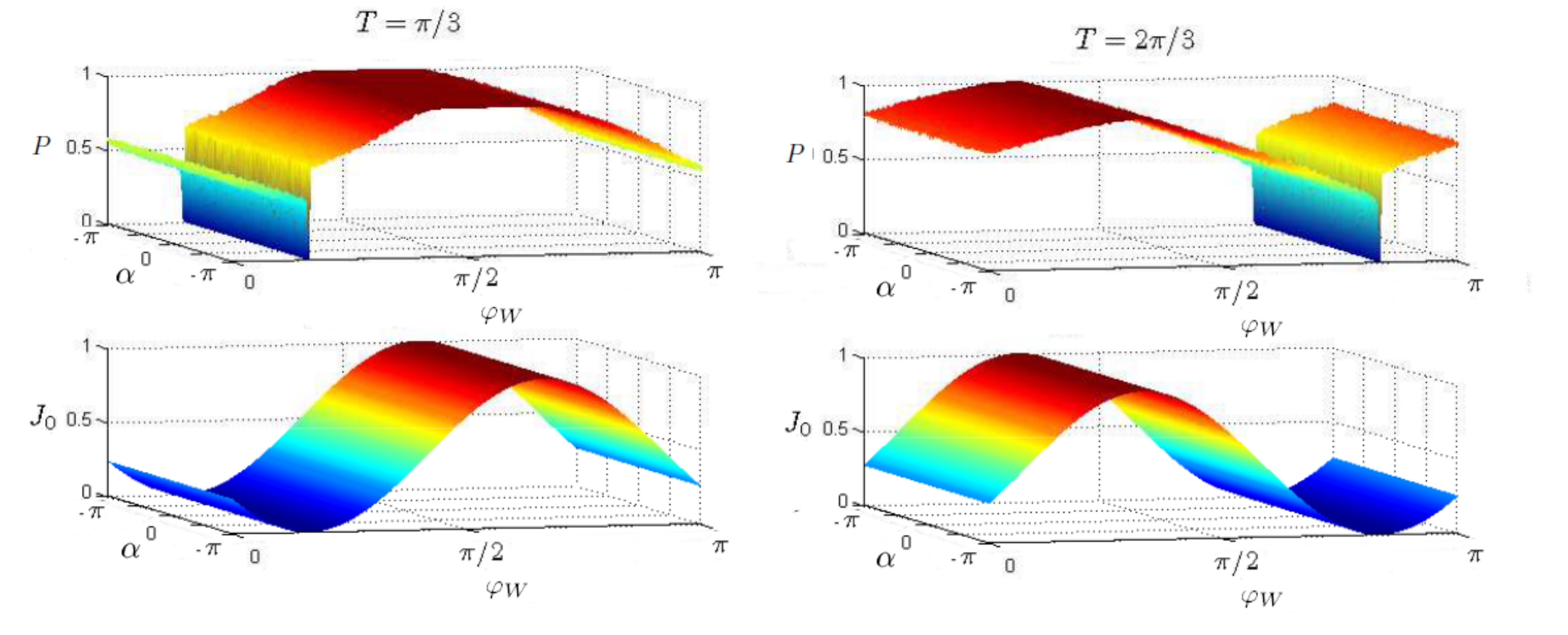}
\vskip3mm
\caption{The plots of the objective value $J_0=\mathcal J_W[0]$ calculated at the special control
$f=0$ and of the probability $P$ that random control $f$ in a sufficiently small neighbourhood of the special control satisfies the
inequality $\mathcal J_W[f]<J_0$. Left: $T=\pi/3$. Right: $T=2\pi/3$.
At each point $(\alpha,\varphi_W)$ the probability $P$ is estimated as the fraction $N_{\mathcal J_W<J_0}$ of  realizations of the inequality $\mathcal J_W<J_0$ among the values of the objective functional $\mathcal J_W$ calculated at $M=10^3$ randomly chosen controls. The random  controls are generated as piecewise constant functions  $f=\sum_{i=1}^{100} a_i\chi_i$, where $\chi_i$ is the characteristic function of the interval $[(i-1)T/100, iT/100]$ and each $a_i$ has normal distribution with unit variance.}
\label{fig1}
\end{center}
\end{figure}

The random controls are generated as piecewise constant functions
$f=\sum_{i=1}^{100}a_i\chi_i$, where $\chi_i$ is the characteristic function of the interval $[(i-\nobreak1)T/100, iT/100]$,
and each $a_i$ has normal distribution with unit variance.
Fig.~\ref{fig1} shows that for $T>\pi/2$ the maxima of the probability $P=P(\alpha,\varphi_W)$
coincide with the points where $J_0(\alpha,\varphi_W)=1$, i.\,e.\ where $f=0$ is a global maximum and, therefore, at this point where are no traps.
In the remaining points $P<1$ and, therefore, there are also no traps. On Fig.~$\ref{fig1}$
the probability $P$  does not depend of the angle $\alpha$, i.\,e.\ on the direction of the vector 
$\mathbf v=\operatorname{Tr}(\boldsymbol{\sigma}V)/2$. This is due to the fact that
values of the objective functional do not depend on this vector, as is stated 
in the following lemma.

\begin{lemma}
\label{l5}
Let $[W,\sigma_z]=0$. Then for any vectors $\mathbf v$ and $\mathbf v'$ determined by the interaction Hamiltonians
$V=(v_x\sigma_x+v_y\sigma_y)$ and
$V'=(v'_x\sigma_x+v'_y\sigma_y)$ in the equation~\eqref{eq22}, the values of the objective functional $\mathcal J_W^\mathbf v[f]$ and $\mathcal J_W^{\mathbf v'}[f]$, calculated at point  $f$ for the $V$ and $V'$, are the same:
$$
\mathcal J_W^\mathbf v[f]=\mathcal J_W^{\mathbf v'}[f].
$$
\end{lemma}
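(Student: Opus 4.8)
The plan is to exploit the rotational symmetry about the $z$-axis generated by the free Hamiltonian $\sigma_z$ of the system~\eqref{eq22}. Introduce the family of diagonal unitaries $R_\theta=e^{i\theta\sigma_z}$. A direct computation of the adjoint action gives $R_\theta\sigma_z R_\theta^\dag=\sigma_z$, while $R_\theta(v_x\sigma_x+v_y\sigma_y)R_\theta^\dag$ is the operator $v_x'\sigma_x+v_y'\sigma_y$ with $(v_x',v_y')$ obtained from $(v_x,v_y)$ by a planar rotation through the angle $2\theta$. Since a planar rotation acts transitively on vectors of a fixed length, for $\mathbf v$ and $\mathbf v'$ with $|\mathbf v|=|\mathbf v'|$ one can choose $\theta$ so that $R_\theta VR_\theta^\dag=V'$, where $V=v_x\sigma_x+v_y\sigma_y$ and $V'=v_x'\sigma_x+v_y'\sigma_y$. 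A genuine rotation preserves $\|V\|$, so the content of the lemma is the invariance of the objective under a change of \emph{direction} of the coupling vector, which is exactly the independence of $\alpha$ observed numerically.

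Next I would transport the controlled dynamics by this rotation. Let $U_t$ solve~\eqref{eq22} for the coupling $V$ and a fixed control $f$ (for~\eqref{eq22} the shift is $f_0=0$, so $g=f$), and set $\widetilde U_t=R_\theta U_t R_\theta^\dag$. Conjugating~\eqref{eq22} by $R_\theta$ and using $R_\theta\sigma_z R_\theta^\dag=\sigma_z$ together with $R_\theta VR_\theta^\dag=V'$ yields
\[
i\,\frac{d\widetilde U_t}{dt}=\bigl(\sigma_z+g(t)V'\bigr)\widetilde U_t,\qquad \widetilde U_0=\mathbb I.
\]
By the uniqueness of the solution of the Schr\"odinger equation (the same existence-uniqueness invoked for~\eqref{eq1}), $\widetilde U_t$ is precisely the evolution operator generated by the coupling $V'$ at the same control $f$. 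In particular $U_T^{\mathbf v'}=R_\theta U_T^{\mathbf v}R_\theta^\dag$, where $U_T^{\mathbf v}$ and $U_T^{\mathbf v'}$ denote the time-$T$ evolutions for $V$ and $V'$.

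Finally I would compare the objective values. The hypothesis $[W,\sigma_z]=0$ forces $W$ to be diagonal, of the form~\eqref{W} up to an irrelevant phase, so $W$ commutes with the diagonal matrix $R_\theta$ and hence $R_\theta^\dag W^\dag R_\theta=W^\dag$. Using the cyclic invariance of the trace,
\[
\operatorname{Tr}\bigl(W^\dag U_T^{\mathbf v'}\bigr)=\operatorname{Tr}\bigl(W^\dag R_\theta U_T^{\mathbf v}R_\theta^\dag\bigr)=\operatorname{Tr}\bigl(R_\theta^\dag W^\dag R_\theta\,U_T^{\mathbf v}\bigr)=\operatorname{Tr}\bigl(W^\dag U_T^{\mathbf v}\bigr),
\]
whence $\mathcal J_W^{\mathbf v'}[f]=\tfrac14|\operatorname{Tr}(W^\dag U_T^{\mathbf v'})|^2=\tfrac14|\operatorname{Tr}(W^\dag U_T^{\mathbf v})|^2=\mathcal J_W^{\mathbf v}[f]$ for every control $f$, which is the assertion.

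There is no serious analytic obstacle here: the argument is a symmetry reduction rather than a computation. The one point that must be handled with care is the alignment of the two ingredients, namely that the \emph{same} rotation $R_\theta$ must simultaneously carry the coupling direction of $\mathbf v$ onto that of $\mathbf v'$ and commute with the target $W$; it is precisely the hypothesis $[W,\sigma_z]=0$ that guarantees the latter, and this is where the diagonal structure of $W$ enters. The remaining steps are routine: identifying the conjugated evolution through uniqueness, and recording that, a rotation being an isometry, the natural statement concerns coupling vectors of equal magnitude.
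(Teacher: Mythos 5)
Your proposal is correct and follows essentially the same route as the paper's proof: conjugation by the diagonal rotation $e^{-i\sigma_z\vartheta/2}$ (your $R_\theta$) maps the evolution for $\mathbf v$ to that for $\mathbf v'$ by uniqueness of the solution of~\eqref{eq22}, and $[W,\sigma_z]=0$ makes the trace invariant under this conjugation. Your explicit remark that the rotation is an isometry, so the statement implicitly concerns vectors of equal magnitude, is a point the paper leaves tacit but does not change the argument.
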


\begin{proof}
Let $U_T^{\mathbf v,f}$ and $U_T^{\mathbf v',f}$ satisfy the equation~\eqref{eq22} with Hamiltonians
$V=v_x\sigma_x+v_y\sigma_y$ and $V'=v'_x\sigma_x+v'_y\sigma_y$. Since
\begin{equation}
\label{eq46}
e^{-i\sigma_z\vartheta/2}(v_x\sigma_x+v_y\sigma_y)e^{i\sigma_z\vartheta/2}
=v'_x\sigma_x+v'_y\sigma_y,
\end{equation}
where $\vartheta=\alpha'-\alpha$ is the angle between the vectors  $\mathbf v'$ and $\mathbf
v$, then the matrix
$$
Z_t=e^{-i\sigma_z{\vartheta}/2}U_t^{\mathbf v,f}e^{i\sigma_z{\vartheta}/2}
$$
satisfies the equation~\eqref{eq22} with the potential $V'=v'_x\sigma_x+v'_y\sigma_y$.
Since $Z_0=\mathbb I$, then
$$
Z_t=e^{-i\sigma_z{\vartheta}/2}U_t^{\mathbf v,f}e^{i\sigma_z{\vartheta}/2}=U_t^{\mathbf
v',f}.
$$
Then
\begin{equation}
\label{eq47}
\mathcal J_W^\mathbf v[f]=\frac14\bigl|\operatorname{Tr}(W^\dag U_t^{\mathbf
v,f})\bigr|^2
=\frac14\bigl|\operatorname{Tr}\bigl(e^{-i\sigma_z\vartheta/2}W^\dag
e^{i\sigma_z\vartheta/2} U_t^{\mathbf v',f}\bigr)\bigr|^2=\mathcal J_W^{\mathbf v'}[f],
\end{equation}
because in the considered case $[W,\sigma_z]=0$ and, therefore, 
$$
e^{-i\sigma_z\vartheta/2}W^\dag e^{i\sigma_z\vartheta/2}=W^\dag.
$$
This proves the lemma.
\end{proof}

\begin{figure}[!ht]
\begin{center}
\includegraphics{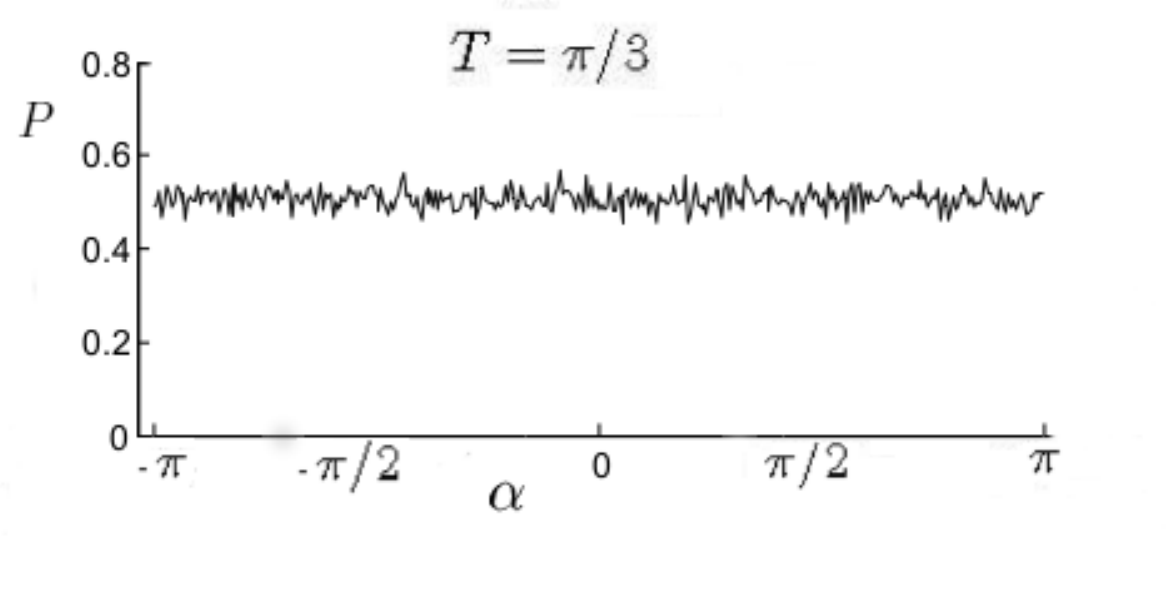}
\vskip3mm
\caption{The probability $P$ that $\mathcal J_{\mathbb H}<J_0$ for $T=\pi/3$.
At each point $\alpha$ the probability $P$ is estimated as the fraction
$N_{\mathcal J_{\mathbb H}<J_0}$
of  realizations of the inequality $\mathcal J_{\mathbb H}<J_0$ among the values of the objective functional
$\mathcal J_{\mathbb H}$ calculated at $M=10^3$ random controls. The random  controls are generated as piecewise constant functions  $f=\sum_{i=1}^{100} a_i\chi_i$, where $\chi_i$ is the characteristic function of the interval
$[(i-1)T/100, iT/100]$ and each $a_i$ has normal distribution with unit variance.}
\label{fig2}
\end{center}
\end{figure}

Fig.~\ref{fig2} shows the probability $P=P(\alpha)$ that
$J_0>\mathcal J_{\mathbb H}[f]$, where the matrix
\begin{equation}
\label{eq48}
\mathbb H=\frac1{\sqrt{2}}
\begin{pmatrix}
1&1
\\
1&-1
\end{pmatrix}
\end{equation}
describes Hadamard gate. From Fig.~\ref{fig2} it is clear that $P(\alpha)\approx 1/2$.
Therefore, where are no traps for any $\alpha$. It is in accordance with Theorem~\ref{t2},
because $[{\mathbb H},\sigma_z]\ne 0$.

\section*{Conclusion} 
In this paper  we prove the Theorem~\ref{t2} about the absence of traps for maximizing the objective functional 
$$
\mathcal J_W[f]=\frac14|\operatorname{Tr}(W^\dag U_T)|^2
$$
for a qubit for small $T$ for almost all $W$. If $[H_0+f_0V,W]\ne0$, then for any $T>0$ all  maxima of the objective functional $\mathcal J_W$ are global.
If $[H_0+f_0V,W]=0$, then the matrix $W$ has the form~\eqref{eq8}. In this case if
\begin{gather*}
\alpha_W\in\biggl(0,\frac{\pi}{2d}\biggr),
\\
d=\biggl\|H_0+f_0V-\frac12\mathbb I\operatorname{Tr} H_0-\frac12f_0\mathbb I
\operatorname{Tr} V\biggr\|,
\end{gather*}
then all  maxima of the objective functional $\mathcal J_W$ are global for any $T>0$.
If $\alpha_W\in[\pi/(2d),\pi/d]$ then all  maxima of the objective functional
$\mathcal J_W$ are global for any $T>\pi/d-\alpha_W$.

\section*{Acknowledgements} The authors thank Corresponding Member of RAS I.\,V.~Volovich for  discussion and useful comments.


\begin{thebibliography}{99}

\bibitem{1} M.\,A.~Nielsen, I.\,L.~Chuang, Quantum computation and quantum information (Cambridge Univ. Press, Cambridge, 2000).

\bibitem{2} L.~Accardi, Yun Gang Lu, I.~Volovich, Quantum theory and its stochastic limit (Springer-Verlag, Berlin, 2002).

\bibitem{3} H.-P.~Breuer, F.~Petruccione, The theory of open quantum systems (Oxford Univ. Press, New York, 2002). 

\bibitem{4} A.\,S.~Trushechkin, I.\,V.~Volovich, Perturbative treatment of inter-site couplings in the local description of open quantum networks. EPL. Vol. 113. Iss. 3. P. 30005 (2016). \href{http://dx.doi.org/10.1209/0295-5075/113/30005}{http://dx.doi.org/10.1209/0295-5075/113/30005}

\bibitem{5} A.\,S.~Holevo, Statistical structure of quantum theory. Lecture Notes in Physics. Monographs. Vol. 67 (Springer-Verlag, Berlin, 2001).

\bibitem{6} A.\,S.~Holevo, Quantum systems, channels, information. A~mathematical introduction. De Gruyter Stud. Math. Phys. Vol. 16. (De Gruyter, Berlin, 2012). 

\bibitem{7} A.~Garon, S.\,J.~Glaser, D.~Sugny, Time-optimal control of $\mathrm{SU}(2)$ quantum operations. Phys. Rev.~A. Vol. 88. P. 043422 (2013). \href{http://dx.doi.org/10.1103/PhysRevA.88.043422}{http://dx.doi.org/10.1103/PhysRevA.88.043422}

\bibitem{8} A.\,F.~Filippov, Differential equations with discontinuous righthand sides, Math. Appl. (Soviet Ser.). Vol. 18 (Kluwer Acad. Publ., Dordrecht, 1988). 

\bibitem{9} H.\,A.~Rabitz, M.\,M.~Hsieh, C.\,M.~Rosenthal, Quantum optimally controlled transition landscapes. Science. Vol. 303. Iss. 5666. Pp. 1998--2001 (2004)
\href{http://dx.doi.org/10.1126/science.1093649}{http://dx.doi.org/10.1126/science.1093649}

\bibitem{10} T.-S.~Ho, H.~Rabitz, Why do effective quantum controls appear easy to find?, J.~Photochem. Photobiol.~A. Vol. 180. Iss. 3. 226--240 (2006).
\href{http://dx.doi.org/10.1016/j.jphotochem.2006.03.038}{http://dx.doi.org/10.1016/j.jphotochem.2006.03.038}

\bibitem{11} A.\,N.~Pechen, D.\,J.~Tannor, Are there traps in quantum control landscapes?,  Phys. Rev. Lett. Vol. 106. P. 120402 (2011).
\href{http://dx.doi.org/10.1103/PhysRevLett.106.120402}{http://dx.doi.org/10.1103/PhysRevLett.106.120402}

\bibitem{12} P.~de~Fouquieres, S.\,G.~Schirmer, A~closer look at quantum control landscapes and their implication for control optimization, Infin. Dimens. Anal. Quantum Probab. Relat. Top. Vol. 16. Iss. 3. P. 1350021 (2013). \href{http://dx.doi.org/10.1142/S0219025713500215}{http://dx.doi.org/10.1142/S0219025713500215}

\bibitem{13} N.~Rach, M.\,M.~M\"uller, T.~Calarco, S.~Montangero, Dressing the chopped-random-basis optimization: A~bandwidth-limited access to the trap-free landscape, Phys. Rev.~A. Vol. 92. P. 062343 (2015).
\href{http://dx.doi.org/10.1103/PhysRevA.92.062343}{http://dx.doi.org/10.1103/PhysRevA.92.062343}

\bibitem{14} A.~Pechen, D.~Prokhorenko, R.~Wu, H.~Rabitz, Control landscapes for two-level open quantum systems, J.~Phys.~A: Math. Gen. Vol. 41. Iss. 4. P. 045205 (2008).
\href{http://dx.doi.org/10.1088/1751-8113/41/4/045205}{http://dx.doi.org/10.1088/1751-8113/41/4/045205}

\bibitem{15} K.\,W.~Moore, A.~Pechen, Xiao-Jiang Feng, J.~Dominy, V.~Beltrani, H.~Rabitz, Universal characteristics of chemical synthesis and property optimization, Chem. Sci. Vol. 2. Iss. 3. Pp. 417--424 (2011).
\href{http://dx.doi.org/10.1039/c0sc00425a}{http://dx.doi.org/10.1039/c0sc00425a}

\bibitem{16} A.~Pechen, N.~Il'in, Trap-free manipulation in the Landau--Zener system. Phys. Rev.~A. Vol. 86. Iss. 5. P. 052117 (2012).
\href{http://dx.doi.org/10.1103/PhysRevA.86.052117}{http://dx.doi.org/10.1103/PhysRevA.86.052117}

\bibitem{17} A.\,N.~Pechen, N.\,B.~Il'in, Coherent control of a~qubit is trap-free. Proc. Steklov Inst. Math. Vol. 285. Iss. 1. Pp. 233--240 (2014).
\href{http://dx.doi.org/10.1134/S0081543814040166}{http://dx.doi.org/10.1134/S0081543814040166}

\bibitem{18} A.\,N.~Pechen, N.\,B.~Il'in, On critical points of the objective functional for maximization of qubit observables. Russian Math. Surveys. 
Vol. 70. Iss. 4. Pp. 782--784 (2015).
\href{http://dx.doi.org/10.1070/RM2015v070n04ABEH004962}{http://dx.doi.org/10.1070/RM2015v070n04ABEH004962}

\bibitem{19} J.~Dominy, Tak-San Ho, H.~Rabitz, Characterization of the critical sets of quantum unitary control landscapes. IEEE Trans. Automat. Control. Vol. 59. Iss. 8. Pp. 2083--2098 (2014).
\href{http://dx.doi.org/10.1109/TAC.2014.2321038}{http://dx.doi.org/10.1109/TAC.2014.2321038}

\bibitem{20} N.\,Ja.~Vilenkin, Special functions and the theory of group representations. Transl. Math. Monogr. Vol. 22 (Amer. Math. Soc., Providence, RI, 1968).

\end{thebibliography}
\end{document}